\newtheorem{corollary}{Corollary}[section]
\newtheorem{theorem}{Theorem}[section]
\newtheorem{lemma}{Lemma}[section]
\newtheorem{proposition}{Proposition}[section]
\newtheorem{remark}{Remark}[section]
\newtheorem{definition}{Definition}[section]
\newtheorem{example}{Example}[section]
\newcommand{\Ex}{\mathbf{E}}
\title[Defaultable Bonds via HKA]
      {Defaultable Bonds via HKA}
\author[Y\^uta Inoue and Takahiro Tsuchiya]{}
\begin{document}

\maketitle

\centerline{\scshape  Y\^uta Inoue$^1$ and Takahiro Tsuchiya$^2$}
 \medskip
{\footnotesize
\centerline{$^1$Graduate School of Mathematics}
\centerline{Ritsumeikan University} 
\centerline{1-1-1 Nojihigashi, Kusatsu, Shiga 525-8577, Japan}
\centerline{shinzo.yuta@gmail.com} }
\medskip

{\footnotesize
\centerline{$^2$Department of Mathematical Sciences}
\centerline{Ritsumeikan University}
\centerline{1-1-1 Nojihigashi, Kusatsu, Shiga 525-8577, Japan}
\centerline{suci@fc.ritsumei.ac.jp} }
\medskip

%\centerline{Received October 2009; revised March 2010}

%%\centerline{(Communicated by xxxx)}

\medskip

\begin{abstract}

{\em To construct a no-arbitrage defaultable bond market, 
we work on the state price density framework. 
Using the heat kernel approach (HKA for short) 
with the killing of a Markov process, 
we construct a {\it single} defaultable bond market that 
enables an explicit expression of a defaultable bond and credit spread 
under quadratic Gaussian settings.
Some simulation results show that 
the model is not only tractable but realistic. }\\[0.5cm]
{\bf Keywords:} (non-)systematic risk, state price density, killed HKA, Markov functional model, quadratic Gaussian.\\
\end{abstract}

%%%%%%%%%%%%%%%%%%%%%%%%%%%%%%%%%%%%%%%%%%%%%%%%%
\section{Introduction}
The HKA, which is an abbreviation of ``Heat Kernel Approach
to interest rate modelling, was introduced by one of the authors 
and his collaborators in \cite{HKA}. 
Briefly speaking, HKA is a systematic method to produce
a tractable interest rate model which is ``Markov functional"
in the sense of Hunt-Kennedy-Pelsser \cite{HKP}. 
In the fundamental paper \cite{HKA}, 
four different types of implementation methods are 
introduced. Namely, 
1) Eigenfunction models
2) Weighted HKA
3) Killed HKA
and 4) Trace Approach. 
As is pointed out in \cite{HKA}, 
the eigenfunction models are tailor-made
for swaption pricing,
and a deeper understanding for its mathematical
structure leads to the trace approach, 
which is mathematically most involved.  
The weighted HKA is extended to 
time-inhomogeneous setting and applied to information-based
models by J. Akahori and A. Macrina \cite{AM:10}. 

In the present paper, we will demonstrate
how the Killed HKA is applied to 
the modelling of defaultable bonds
by constructing a market where the market price of risk 
and the default probability are ``built in the same block"
(whose precise meaning will be given later). 
We stress that the HKA is basically a state-price density approach
where everything is written under the physical= statistical 
measure. Since the HKA furthermore 
gives an analytically tractable model in nature, 
the framework proposed in this paper would be promising 
in respect of modelling defaultable markets. 

The organization of the present paper is as follows. 
After recalling the {\em plain-vanilla} HKA in
section \ref{PVHKA}
and the killed HKA in section \ref{KHKA}, 
we shall give the main result, a framework
with in the Killed HKA to model a defaultable bond market
in section \ref{FMW}. 
In section \ref{QE}, we will give some simulation
results of an explicit example with a quadratic form of Wiener process.

\section*{Acknowledgement}
The authors are 
deeply grateful to Professor Dr. Jiro Akahori. 
His insightful comments and suggestions 
were an enormous help to us. 

\section{Heat Kernel Approach}
Here we briefly recall the approach.
\subsection{Plain-Vanilla HKA}\label{PVHKA}
We work on a probability space 
$ (\Omega,\mathcal{F},{\mathsf P}) $ 
with filtration $ { \{ \mathcal{F}_t \} }_{t \geq 0} $. 
Now we consider a general Markov process 
$ { \{ X^x_t \} }_{t \geq 0,x \in \mathcal{S}} $ 
on a polish space $ \mathcal{S} $.

\begin{definition}
Let $X$ be an $\mathcal{S}$-valued Markov process. 
We shall say that 
a function $p$ satisfies the {\em propagation property\/} if  
\begin{equation}\label{propro}
\Ex [p(t,X_s^x)]=p(t+s,x)
\end{equation}
holds for any $t,s \geq 0$ and $x \in \mathcal{S}$.
\end{definition}
The following fact is initialized by \cite{AT:06} and 
developed in \cite{HKA}.
\begin{proposition}[Akahori et al. \cite{HKA}]
Let $X$ be a $\mathcal{S}$-valued Markov process,
$ \lambda $ be a positive function on the half line,  
and $p$ be a function with the propagation property. 
The bonds market given by 
\begin{equation}\label{bond1111}
P_f(t,T) 
= \frac{p(\lambda_T + T-t,X^x_t)}{p(\lambda_t,X^x_t)}
\end{equation}
is an arbitrage free market.
\end{proposition}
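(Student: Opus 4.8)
The plan is to produce an explicit strictly positive state price density (pricing kernel) $\pi$ under ${\mathsf P}$ and to verify that $\pi_t P_f(t,T)$ is an $\{\mathcal F_t\}$-martingale for every fixed maturity $T$; recall that existence of such a $\pi$ is exactly the notion of ``arbitrage free'' in force here, which is the natural one since the whole HKA is built under the physical measure. The candidate is dictated by the very shape of \eqref{bond1111}: take $\pi_t := p(\lambda_t, X_t^x)$, so that $\pi_t P_f(t,T) = p(\lambda_T + T - t, X_t^x)$ and in particular $P_f(T,T)=1$, as it must be.

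First I would check that every argument fed to $p$ is nonnegative: for $0 \le t \le T$ we have $\lambda_T + T - t \ge \lambda_T > 0$ and $\lambda_t > 0$ since $\lambda$ is positive, so the propagation property \eqref{propro} applies at every point used below, and $\pi_t > 0$. Next, fix $T$ and set $M_t := p(\lambda_T + T - t, X_t^x)$ for $t \in [0,T]$. For $0 \le t \le t' \le T$, write $t' = t + s$ with $s \ge 0$; by the Markov property of $X$ one has $\Ex[p(a, X_{t+s}^x) \mid \mathcal F_t] = \bigl(\Ex[p(a, X_s^y)]\bigr)\big|_{y = X_t^x}$, and then \eqref{propro}, applied with the starting point $y = X_t^x$, gives $\Ex[p(a, X_s^{y})] = p(a + s, y)$. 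Choosing $a = \lambda_T + T - t'$ yields
\[
\Ex[M_{t'} \mid \mathcal F_t] = p(\lambda_T + T - t' + s, X_t^x) = p(\lambda_T + T - t, X_t^x) = M_t,
\]
so $M$ is a martingale; integrability is automatic because $M_t \ge 0$ and $\Ex[M_t] = p(\lambda_T + T, x) < \infty$ by \eqref{propro}. Dividing the identity $\pi_t P_f(t,T) = M_t$ by the strictly positive $\pi_t$, we conclude that each deflated bond price $\pi_t P_f(t,T)$ is a martingale, which gives the claim.

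The only genuinely delicate point is the passage from the propagation property --- stated for the process issued from a \emph{deterministic} point $x$ --- to the conditional version under $\mathcal F_t$; this is precisely where the Markov property enters, via the identity $\Ex[f(X_{t+s}^x)\mid\mathcal F_t] = \bigl(\Ex f(X_s^y)\bigr)\big|_{y=X_t^x}$ applied to $f = p(a,\cdot)$. Everything else is bookkeeping: the domain check, the terminal condition $P_f(T,T)=1$, positivity of $\pi$, and the trivial integrability. It is worth stating the operative definition of ``arbitrage free'' at the outset --- existence of a strictly positive adapted $\pi$ making all deflated bond prices martingales --- so that the computation above is recognised as a complete proof.
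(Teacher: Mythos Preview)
Your proof is correct and follows essentially the same approach as the paper: both define the state price density $\pi_t = p(\lambda_t, X^x_t)$ and verify, via the Markov property combined with the propagation property \eqref{propro}, that $\pi_t P_f(t,T) = p(\lambda_T + T - t, X^x_t)$ is a martingale. Your version is somewhat more explicit about the intermediate technical checks (domain of $p$, integrability, terminal condition, the martingale property at all pairs $t \le t'$), but the core computation is identical.
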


\begin{example}[Generic Example]
Take a measurable, bounded $ h: \mathcal{S} \to \mathbb{R}_{\geq 0} $, then
\begin{equation}\label{generic}
p(t,x):= \Ex [ h(X^x_t) ]
\end{equation}
satisfies the propagation property (\ref{propro}).
In fact, by the Markov property, we have
\begin{equation*}
\begin{split}
\Ex \left[p(t,X_s) \right]&=\Ex \left[ \Ex \left[h(X^{X^x_s}_t) \right] 
\right]
= \Ex \left[\Ex \left[ h(X^x_{t+s}) | \mathcal{F}^X_s \right] \right]\\
&=\Ex \left[h(X^x_{t+s}) \right]
=p(t+s,x).
\end{split}
\end{equation*}
%Notice that this remarkable simple formula is given with respect 
%to the physical measure.
\end{example}

\begin{proof}
Let $\pi_t = p( \lambda_t, X^x_t )$.
By the propagation property of $p$ and the Markov property
of $ X $, we have
\begin{equation*}
\begin{split}
P_f(t,T)&=\frac{\Ex \left[ \pi_T | \mathcal{F}_t \right]}{\pi_t}
=\frac{\Ex \left[ p(\lambda_T,X^x_T) | \mathcal{F}_t \right]}
{p(\lambda_t,X^x_t)}\\
&=\frac{\Ex \left[ p(\lambda_T ,X^{X^x_t}_{T-t}) \right]}
{p(\lambda_t,X^x_t)}
=\frac{p(\lambda_T+ T-t,X^x_t)}{p(\lambda_t,X^x_t)}.
\end{split}
\end{equation*}
This means $\pi$ is the state price density of the market.
\end{proof}

It should be noted that we do not assume $ \pi_t $
to be a supermartingale in [\ref{bond1111}], i.e. in economic terms 
\emph{we do not assume positive short rates}.
The four implementation methods mentioned in the introduction is 
introduced in \cite{HKA} to obtain supermartingales out of a propagator, 
or equivalently to obtain positive rate models.

\subsection{The Killed HKA}\label{KHKA}
We then recall, and give a more detailed description to,
the Killed HKA\footnote{This part is shared with \cite{TI}.}.
Let $V$ be a non-negative measurable function on $\mathcal{S}$. 
Put $ Y_t^y = y + \int_0^t V (X^x_s)\,ds $ for $ y \in \mathbf{R} $. 
%\begin{Lem}[Akahori et al. \cite{HKA}]\label{q-quadratic}
Let us define 
\begin{equation}\label{qtx}
q (t, x ) =  \Ex [ \exp{ ( - \int_0^t V (X^x_s) \,ds ) } ].
\end{equation}
Then the function 
\begin{equation*}
q (t, x, y ) = e^{-y} q (t, x ),  
\end{equation*}
satisfies propagation property with respect to $(X^x, Y^y )$; 
\begin{equation}\label{pp}
\Ex [q (s, X^x_t, Y^y_t )]= q (t+s, x,y). 
\end{equation}
In fact, by the Markov property of $X$, 
we have 
\begin{equation*}
\begin{split}
\Ex [e^{-Y_{t+s}} | \mathcal{F}_t]
&=\Ex [e^{- Y_{s} ( \theta_t )} | \mathcal{F}_t]
\times e^{- Y_{t} } %\\ &
=\Ex [e^{-Y_{s}} | X_t]
\times e^{- Y_{t} },
\end{split}
\end{equation*}
where $\theta$ is the shift operator. 
Thus we obtain 
\begin{equation*}
\begin{split}
\Ex [q (s, X^x_t, Y^y_t )]
=\Ex [e^{-Y^y_t} q (s, X^x_t)] %\\ &
=\Ex [e^{-Y^y_t} e^{  - Y_{s}^y \circ \theta_t  }] %\\&
= q (t+s, x,y). 
\end{split}
\end{equation*}
This fact ensures that the bond market model
constructed as
\begin{equation}%\label{KHKA}
P(t,T) = \frac{q(\lambda_T+T-t, X^x_t)}{q(\lambda_t,X^x_t)},
\end{equation}
where $ \lambda $ is an increasing function, 
is arbitrage-free since we can choose
\begin{equation*}
\pi_t = q( \lambda_t, X_t ) \exp (-\int_0^{t} V(X_s)\,ds)  
\end{equation*}
as a state price density of the market. 
In fact, 
\begin{equation*}
\begin{split}
& \Ex [\pi_T |\mathcal{F}_t] =
%\Ex [q(\lambda_T,X_T) \exp (-\int_0^{T} V(X_s)\,ds) |X_t] \\ & =
\Ex [ \,\, \Ex [ \exp{ ( - \int_{T}^{T+\lambda_T} 
V (X^x_u) \,du ) } |\mathcal{F}_T] \exp (-\int_0^{T} V(X_s)\,ds)\,\,|\mathcal{F}_t] \\
&= \Ex [\exp{ ( - \int_{t}^{T+\lambda_T} V (X^x_u) \,du ) } \,\,|X_t] 
\exp (-\int_0^{t} V(X_s)\,ds) \\
& = q(\lambda_T+T-t, X_t) \exp (-\int_0^t V(X_s)\,ds) 
= \pi_t \frac{q(\lambda_T+T-t, X_t)}{q(\lambda_t,X_t)}. 
\end{split}
\end{equation*}

Note that the bond price $ P $ is decreasing in $ T $
since $ q $ is increasing in $ t $, 
which is ensured by the positivity of $ V $.
Thus we obtain a positive rate model.

%%%%%%%%%%%%%%%%%%%%%%%%%%%%%%%%%%%%%%%%%%%%%%%%%
\section{HKA to Defaultable Bond}\label{FMW}
This section is the main part of the present paper. 
%\subsection{Single Defaultable Bonds Market via Killing}
Let us now consider a defaultable bond in the following situation:
\begin{enumerate}
\item The bond pays a unit account at the maturity $ T $ unless it defaults.
\item At the default time $ \tau $, nothing will be recovered. 
\item The state variable is a Markov process $ \{ X^x_t; t \geq 0 \} $, 
which is observable in the market.
\item The default probability is completely determined through 
the information of $ X $ in the following manner;
the hazard rate of the default time on the filtration $\mathcal{F}^X$
is given by 
$ \displaystyle
\Ex [1_{\{\tau > t \} } | \mathcal{F}^X_t] 
= \exp \left(- \int_0^t V (X^x_u) \,du \right) $ 
where 
$ \mathcal{F}^X $ is the natural filtration on $ X $ and 
$ V $ is a non-negative measurable function. 
\item The default come as a ``surprise" to the market.
To be precise, the market filtration $  \{\mathcal{G}_t \} $ 
is defined as 
$ \mathcal{G}_t 
= \sigma (X_s ,  \{ \tau \leq s \} ;  s\leq t )$ 
and assume that $ \mathcal{F}^X_0 = \{ \Omega, \emptyset \} = \mathcal{G}_0 $. 

\item A state price density of the market is given by 
$\pi_t:=q( \lambda_t , X_t ) 
\displaystyle
=\Ex \left[ \exp \left(- \int_0^{\lambda_t} V (X^x_u) \,du \right) \right]$ 
where $q$ is defined as $(\ref{qtx})$ and 
$ \lambda $ is a non-decreasing function.
\end{enumerate}

Note that the assumptions 
1--5 may be natural (except assumption 2, which 
assumes zero recovery) and very generic, 
while the last assumption is very specific in that 
the function $ V $ controls both the market price of a risk
as well as the default probability of a bond. 
Very heuristically speaking, this market is fully subject to 
the risk of a defaultable bond. 

We stress that this is just a toy model, which exhibits 
how the killed HKA is applied to a defaultable market modeling. 
The following is established in \cite{TI}: 
\begin{theorem}\label{Main}
Under the above assumptions 1--6, 
\begin{itemize}
\item[(i)] the price $P_d(t,T)$ of a defaultable zero coupon bond is given by 
\begin{equation}\label{defaultable}
P_d(t,T) 
= 
1_{ \{\tau > t\}} 
\frac{q ( \lambda_T + T-t,  X^x_t )}{q ( \lambda_t, X^x_t)},
\end{equation}
\item[(ii)] the price $P_f(t,T)$ of a default-free bond 
is given by 
\begin{equation}\label{dble}
P_f(t,T) =  
\frac{\hat{q}({\lambda_T + T-t},{T-t},X^x_t) }{q (\lambda_t, X^x_t)},
\end{equation}
where 
\begin{equation}\label{q^tx}
\hat{q}(t, s, x)=
 \Ex \left[ \exp \left( -\int_{s}^{t} V (X^x_u) \,du \right) \right], 
\end{equation}
\item[(iii)] and then the ``credit spread" is given by
\begin{equation}\label{crspread}
\partial_T \log 
\frac{\hat{q}({\lambda_T + T-t},{T-t},X^x_t)}{\hat{q}(\lambda_T + T-t,0, X^x_t) }. 
\end{equation}
\end{itemize}
\end{theorem}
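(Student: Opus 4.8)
The plan is to price each instrument by applying the state-price-density pricing rule $P(t,T) = \Ex[\pi_T \, \xi_T \mid \mathcal{G}_t]/\pi_t$, where $\xi_T$ is the payoff at maturity, and $\pi_t = q(\lambda_t, X_t)$ is the given state price density from assumption 6, exploiting that the killed-HKA computation already carried out in Section \ref{KHKA} identifies the relevant conditional expectations with $q$.

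First I would establish (i). The defaultable bond pays $1_{\{\tau > T\}}$ at time $T$, so its price is $P_d(t,T) = \Ex[\pi_T 1_{\{\tau > T\}} \mid \mathcal{G}_t]/\pi_t$. The key step is to condition first on $\mathcal{F}^X_T \vee \mathcal{G}_t$ and use assumption 4 together with the ``surprise'' structure of assumption 5: on $\{\tau > t\}$, $\Ex[1_{\{\tau > T\}} \mid \mathcal{F}^X_T \vee \mathcal{G}_t] = 1_{\{\tau > t\}} \exp(-\int_t^T V(X_u)\,du)$, which is the standard formula for a Cox/doubly-stochastic default time conditioned on the enlarged filtration. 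Substituting $\pi_T = q(\lambda_T, X_T) \exp(-\int_0^{\lambda_T} V(X_u)\,du)$ — wait, more carefully $\pi_T$ as given is $q(\lambda_T, X_T)$; here one must be careful which normalization of the density is used, but following Section \ref{KHKA} the object whose conditional expectation telescopes is $q(\lambda_t, X_t)\exp(-\int_0^t V(X_s)\,ds)$. After plugging in and taking $\Ex[\,\cdot \mid X_t]$, the Markov property of $X$ collapses the nested exponential integral exactly as in the displayed computation after equation (\ref{q^tx})'s analogue in Section \ref{KHKA}, yielding $\Ex[\exp(-\int_t^{T+\lambda_T} V(X_u)\,du) \mid X_t] = q(\lambda_T + T - t, X_t)$, and dividing by $\pi_t = q(\lambda_t, X_t)$ gives (\ref{defaultable}).

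Next, for (ii) the payoff is the deterministic $1$ at maturity $T$ regardless of default, so $P_f(t,T) = \Ex[\pi_T \mid \mathcal{G}_t]/\pi_t$. The point here is that this expectation does \emph{not} contain the default indicator, so instead of telescoping into a single $q$ we get an expression of the form $\Ex[\exp(-\int_{T-t}^{\lambda_T + T - t} V(X_u)\,du) \mid X_t]$ — the integration starts at $T-t$ rather than $0$ because the $\lambda_T$-length ``extra'' integral in $\pi_T = q(\lambda_T, X_T)$ lives on the time interval $[T, T+\lambda_T]$, which after re-centering at time $t$ via the Markov property becomes $[T-t, T-t+\lambda_T]$. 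This is precisely $\hat q(\lambda_T + T - t, T-t, X_t)$ by definition (\ref{q^tx}), and dividing by $q(\lambda_t, X_t)$ gives (\ref{dble}). Part (iii) is then essentially a definition: the credit spread at maturity $T$ is (the $T$-derivative of the log of) the ratio $P_d(t,T)/P_f(t,T)$ on $\{\tau > t\}$; since $q(s,x) = \hat q(s,0,x)$, formula (\ref{defaultable}) over formula (\ref{dble}) is exactly $\hat q(\lambda_T+T-t, T-t, X_t)/\hat q(\lambda_T+T-t, 0, X_t)$, and applying $\partial_T \log$ yields (\ref{crspread}).

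The main obstacle I anticipate is part (i): getting the conditioning across the market filtration $\mathcal{G}_t$ versus the $X$-filtration completely rigorous. One must justify the key projection formula for the Cox-type default time — that $\Ex[1_{\{\tau>T\}} f(X) \mid \mathcal{G}_t] = 1_{\{\tau > t\}}\,\Ex[\exp(-\int_t^T V(X_u)\,du) f(X) \mid \mathcal{F}^X_t]/\exp(-\int_0^t V(X_u)\,du)$ — which requires the standard but slightly delicate argument that, conditionally on $\mathcal{F}^X_\infty$, the default time is the first jump of a Poisson-type process with the prescribed survival function, together with the immersion ($H$-hypothesis) property implicit in assumption 5. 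Once that formula is in hand, everything else is bookkeeping with the Markov property of $X$ and the already-verified propagation property of $q$, so the remaining steps are routine.
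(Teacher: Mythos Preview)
Your plan is essentially the paper's own argument: the paper proves (i) by invoking exactly the projection formula you single out as the main obstacle, citing it as Dellacherie's lemma,
\[
\Ex[\,1_{\{\tau>T\}}Z \mid \mathcal G_t\,]
=\frac{1_{\{\tau>t\}}}{\Ex[1_{\{\tau>t\}}\mid\mathcal F^X_t]}\,
\Ex[\,1_{\{\tau>T\}}Z\mid\mathcal F^X_t\,],
\]
and then using assumption~4 plus the Markov/propagation computation from Section~\ref{KHKA}; parts (ii) and (iii) are handled just as you describe. Two small bookkeeping slips to fix: in your stated projection identity the division by $\exp(-\int_0^t V)$ is redundant once the integral in the numerator already runs from $t$ to $T$, and in (iii) you have the ratio inverted --- on $\{\tau>t\}$ one has $P_d/P_f=\hat q(\cdot,0,X_t)/\hat q(\cdot,T-t,X_t)$, so that \emph{minus} $\partial_T\log(P_d/P_f)$ gives (\ref{crspread}).
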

\begin{remark}
Note that the ``credit spread" makes no sense when $\tau \le t$, 
so we can only think of the case that $\tau > t$.
\end{remark}
\begin{proof}
The proof is based on the following fundamental lemma
due to Dellacherie (see \cite{D}):
For any $\mathcal{F}_T^X$-integrable random variable 
$ Z $ and $ 0<t <T$, we have
\begin{equation*}
\Ex [1_{ \{ \tau > T \} } Z |\mathcal{G}_t] 
= 
\frac{1_{\{ \tau> t\} } }{ E[1_{\{ \tau> t\} }| \mathcal{F}^X_t] }
\Ex [ 1_{ \{ \tau > T \} } Z | \mathcal{F}^X_t].
\end{equation*}
Hence, we have
\begin{equation*}
P_d(t,T)
=\frac{ 1 }{ \pi_t }\,\, \frac{ 1_{ \{\tau > t\} } }
{ \Ex \left[ 1_{ \{\tau > t\} } | \mathcal{F}^X_t \right] }\
\,\,\Ex \left[ 1_{ \{\tau > T\} } \pi_T | \mathcal{F}^X_t \right].
\end{equation*}
Then by a Markov property and a Tower property,
\begin{equation*}
\Ex \left[ 1_{ \{\tau > T\} } \pi_T | \mathcal{F}^X_t \right]
= \Ex \left[\,\, \Ex \left[ 1_{ \{\tau > T\} } | \mathcal{F}^X_T \right]
q(\lambda_T,X^x_T)\, |\, \mathcal{F}^X_t \right],
\end{equation*}
and by the assumption that 
$ \displaystyle \Ex [\,1_{\{\tau > t \} }\, |\, \mathcal{F}^X_t] = 
\exp \left( -\int_{0}^{t} V (X^x_u) \,du \right)$
%e^{- \int_0^t V (X^x_u) \,du} $,
\begin{equation*}
\Ex \left[\, \Ex \left[\, 1_{ \{\tau > T\} }\, |\, \mathcal{F}^X_T \right]
q(\lambda_T,X^x_T)\, |\, \mathcal{F}^X_t \right] 
= \exp \left( -\int_{0}^{t} V (X^x_u) \,du \right) %\,\, 
\Ex \left[\, e^{ -\int_t^T V(X^x_s) \,ds } 
q(\lambda_T,X^x_T)\, |\, \mathcal{F}^X_t \right].
\end{equation*}
%Moreover, using a Markov property and a Tower property again, 
Here applying the fact of the equation $(\ref{pp})$, we have
\begin{equation*}
\Ex \left[ e^{ -\int_t^T V(X^x_s) \,ds } 
q(\lambda_T,X^x_T) | \mathcal{F}^X_t \right]
=q(\lambda_T + T -t, X^x_t),
\end{equation*}
so that
\begin{equation*}
P_d(t,T)
= 1_{ \{\tau > t\} }
\frac{q(\lambda_T + T - t, X^x_t)}{ q(\lambda_t, X^x_t) }.
\end{equation*} 
On the other hand, (ii) follows a Markov property and a Tower property. And  
it is known that the ``credit spread" is given by
\begin{equation*}
-\partial_T \log \frac{ P_{d}(t,T) }{ P_{f}(t,T)}.
\end{equation*}
Here since $P_d(t,T)$, $P_f(t,T)$ are (i), (ii) respectively, 
\begin{equation*}
-\partial_T \log \frac{ P_{d}(t,T) }{ P_{f}(t,T)}
= \partial_T \log \frac{ \Ex \left[ q(\lambda_T, X^x_T) | 
\mathcal{F}^X_t \right] }{ q(\lambda_T + T - t, X^x_t) }\
\end{equation*}
when $\tau>t$. Then by a Markov property and a Tower property,
\begin{equation*}
\Ex \left[ q(\lambda_T, X^x_T) | \mathcal{F}^X_t \right]
=\Ex \left[ e^{ -\int_{T-t}^{\lambda_T + T - t} V(X^{X^x_t}_s) \,ds } \right]
= \hat{q}(\lambda_T + T - t, T-t, X^x_t).
\end{equation*}
Hence, we obtain
\begin{equation*}
-\partial_T \log \frac{ P_{d}(t,T) }{ P_{f}(t,T)}
= \partial_T \log \frac{\hat{q}({\lambda_T + T-t},{T-t},X^x_t)}
{q(\lambda_T + T-t, X^x_t) }.
\end{equation*}
\end{proof}

%%%%%%%%%%%%%%%%%%%%%%%%%%%%%%%%
\section{Quadratic Example}\label{QE}
Now we give some simulation results of 
an explicit example, 
where $ X $ is a d-dimensional Wiener process 
and $ V (x) = \frac{ \beta^2 | x |^2}{2} $ ($ \beta >0 $).
Let $q(t,x)$ and $\hat{q}(t,x)$ be as in (\ref{qtx}), (\ref{q^tx}), 
then they are explicitly given by
\begin{equation}\label{qtxofqc}
q(t,x) = \left( \cosh \beta t  \right)^{-d/2}
\exp \left(-\frac{\beta x^2}{2} 
\frac{ \sinh \beta t }{  \cosh \beta t  } \right),
\end{equation}
and
\begin{equation}\label{q^txofqc}
\hat{q}(t,x)
=\left(\, \cosh \beta(t-s) + \beta s \sinh{ \beta(t-s) }\,  \right)^{-d/2}
\exp \left(-\frac{ \beta x^2 }{ 2 } \frac{ \tanh{ \beta(t-s) } }
{ 1 + \beta s \tanh{ \beta(t-s) } } \right),
\end{equation}
which result from Lemma \ref{lemma:d-quadratic}, Corollary \ref{Corollary:d-quadratic} in the Appendix. Hence, we obtain the analytic expression of the bond prices.
The following simulated yield curves (Fig.\ref{pfcurve@x}) and (Fig.\ref{pfcurve@beta}) implied by a default-free bond as 
\begin{equation*}
-\frac{ 1 }{ T } \log{ P_f(t, T) }
\end{equation*}
are obtained by using (\ref{dble}) and (\ref{q^txofqc}). 
Here the parameters are set to be $\beta = 0.1$, $x = 0.01, 10, 20, 30$, $\lambda_t = e^{t}/10$, and the present time $t = 0$ in (Fig.\ref{pfcurve@x}), 
$\beta = 1.8$ and $\lambda_t = e^{t}/100$ in (Fig.\ref{pfcurve@beta}).
Then $x$-axis stands for the maturities ranging from one year to ten years 
and $y$-axis does for the price of a default-free bond. 

\begin{figure}[htbp]
\begin{tabular}{cc}
\begin{minipage}{0.5\hsize}
\begin{center}
\includegraphics[width=70mm]{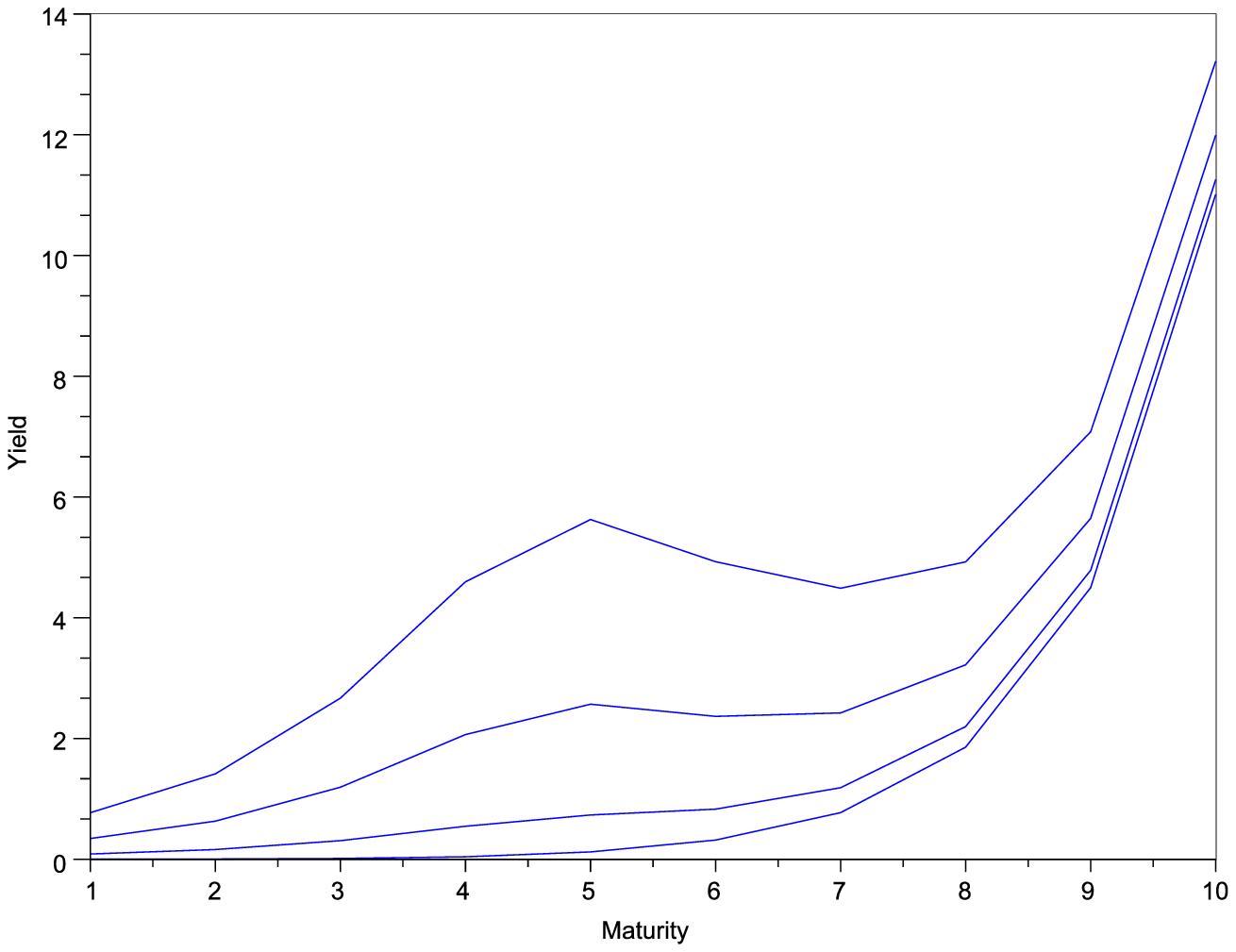}
\caption{Simulated yield curves implied by a default-free bond 
when $\lambda_t = e^{t}/10$}
\label{pfcurve@x}
\end{center}
\end{minipage}
\begin{minipage}{0.5\hsize}
\begin{center}
\includegraphics[width=70mm]{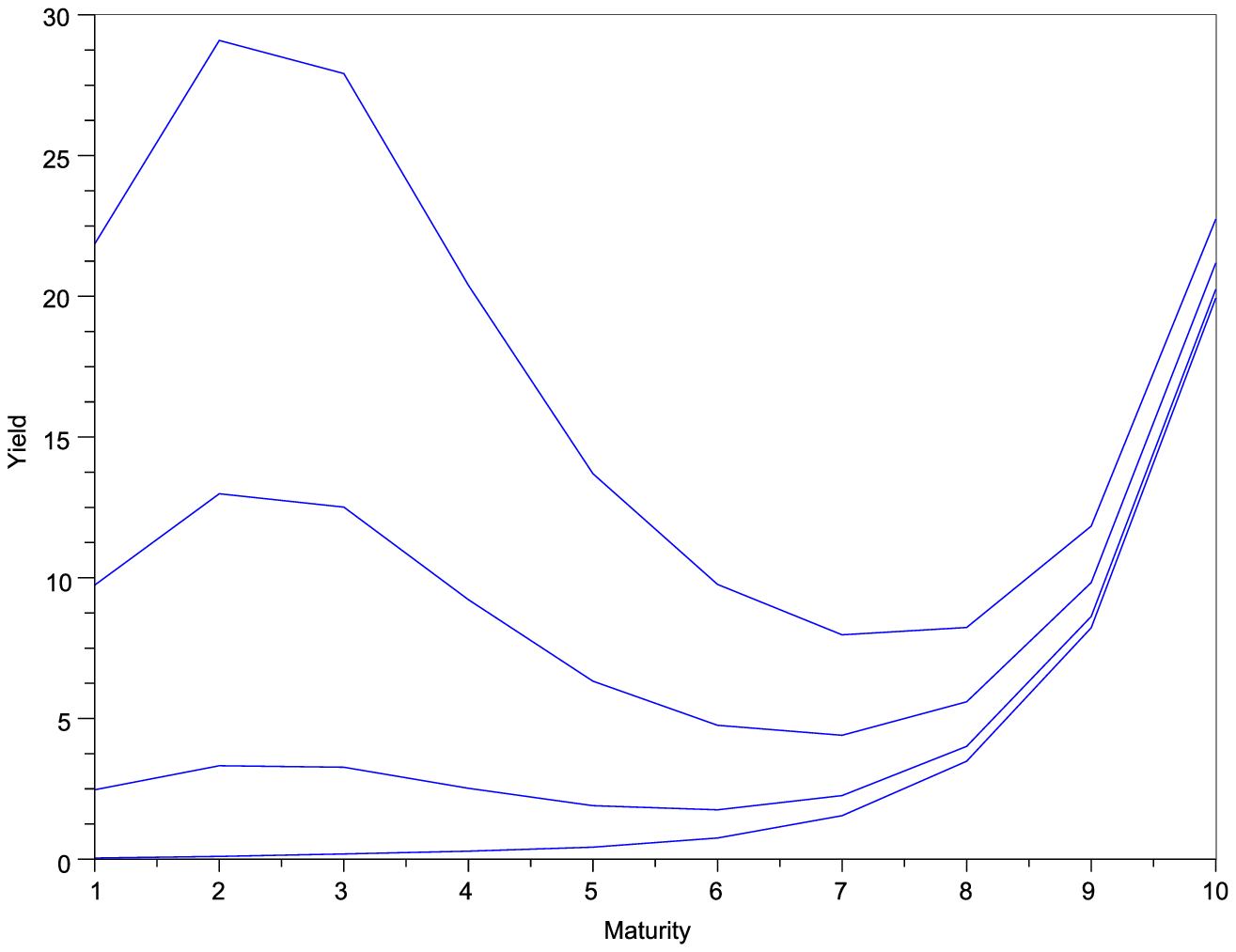}
\caption{Simulated yield curves implied by a default-free bond 
when $\lambda_t = e^{t}/100$}
\label{pfcurve@beta}
\end{center}
\end{minipage}
\end{tabular}
\end{figure}

(Fig.\ref{pfcurve@x}) and (Fig.\ref{pfcurve@beta}) show 
increasing the value of $x$ does not make the curve shift upward, 
but also cause a ``hump" in the curve, 
which can not be observed in the normal affine model, 
with the proper choice of $\lambda$. 
Moreover, using the formula (\ref{crspread}) and (\ref{q^txofqc}), 
we obtain the following simulated credit spread curves as 
(Fig.\ref{figcrsp@beta}). 
Here the parameters are set to be $\lambda_t = \sqrt{t}$, 
$\beta = 0.1, 0.2,...,1$, $x = 0$, and the present time $t = 0$. 

\begin{figure}[htbp]
\begin{center}
\includegraphics[width=70mm]{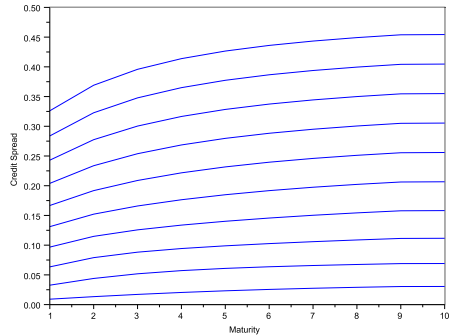}
\caption{Simulated credit spread curves}
\label{figcrsp@beta}
\end{center}
\end{figure}

As usual, the lower the credit rating of a defaultable bond is, 
the wider the spread is, 
and it is non-decreasing in the maturity time.
Moreover, the spread of a defaultable bond lower rated is much wider 
in the maturity time than the one of a bond higher rated.
It should be thought that (Fig.\ref{figcrsp@beta}) shows this fact.

%%%%%%%%%%%%%%%%%%%%%%%%%%%%%%%%%%%%%%%%%%%%%%%%%%%%
\section{Conclusions}\label{Conc}
We have introduced a way of 
constructing a single defaultable bond market model
under the physical measure $\mathsf{P}$ by applying
the killed HKA. We have also 
presented some simulation results in the quadratic case. 
Comparing the well-known Hull-White model, %And 
we can have observed a 
complex ``hump" in the yield implied by a default-free bond, 
which comes from the parameter $\lambda$.

%%%%%%%%%%%%%%%%%%%%%%%%%%%%%%%%%
\section{Appendix}

\begin{lemma}\label{lemma:d-quadratic}
Let $X$ be a $d$-dimensional Wiener process 
starting at $x$. 
For $ \alpha, \beta \geq 0 $, it holds
\begin{equation}
\begin{split}
& \Ex [e^{-\alpha |X^x_t|^2 -\frac{\beta^2}{2} \int_0^t |X^x_s|^2 \,ds}] \\
&= 
\begin{cases}
%case 1
\left( \cosh \beta t +\frac{2 \alpha}{\beta} \sinh \beta t \right)^{-d/2}
\exp \left(
-\frac{\beta x^2}{2} 
\frac{{\beta} \sinh \beta t - {2 \alpha} \cosh \beta t}
{ {\beta} \cosh \beta t + {2 \alpha} \sinh \beta t  }
\right)  & \beta > 0, \\
%case 2 
\left( 2 \alpha t + 1  \right)^{-d/2}
\exp \left( -\frac{\alpha x^2}{2 \alpha t + 1} 
\right) & \beta = 0 .
\end{cases}
\end{split}
\end{equation}
\end{lemma}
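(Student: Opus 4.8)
The plan is to reduce the statement to dimension one and then extract the closed form from a Riccati equation produced by Feynman--Kac. Since the coordinates $X^{(1)},\dots,X^{(d)}$ of the $d$-dimensional Wiener process are independent one-dimensional Wiener processes started at $x_1,\dots,x_d$, and $|X^x_t|^2=\sum_i(X^{(i)}_t)^2$, $\int_0^t|X^x_s|^2\,ds=\sum_i\int_0^t(X^{(i)}_s)^2\,ds$, the expectation factorizes as $\prod_{i=1}^d\Ex[\exp(-\alpha(X^{(i)}_t)^2-\tfrac{\beta^2}{2}\int_0^t(X^{(i)}_s)^2\,ds)]$. Hence, once the one-dimensional identity is established, taking its $d$-th power turns the prefactor into its $(-d/2)$ power and, since $\sum_i x_i^2=x^2$ (writing $x^2$ for $|x|^2$ as in the statement), adds the exponents into the claimed expression; so it suffices to treat $d=1$.

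For $d=1$ I would set $u(t,x):=\Ex[\exp(-\alpha(X^x_t)^2-\tfrac{\beta^2}{2}\int_0^t(X^x_s)^2\,ds)]$, which is well defined with $0<u\le1$ because $\alpha,\beta\ge0$. By the Feynman--Kac formula with potential $V(x)=\tfrac{\beta^2}{2}x^2$ and initial datum $f(x)=e^{-\alpha x^2}$, the function $u$ solves
\[
\partial_t u=\tfrac12\,\partial_x^2 u-\tfrac{\beta^2}{2}\,x^2\,u,\qquad u(0,x)=e^{-\alpha x^2}.
\]
I would then look for a solution of the Gaussian form $u(t,x)=A(t)\,e^{-B(t)x^2}$ with $A(0)=1$ and $B(0)=\alpha$ --- a class preserved by the equation because $V$ is quadratic --- and make the argument rigorous by verifying a posteriori that the function so obtained solves this Cauchy problem and has the growth under which the Feynman--Kac representation is its unique solution. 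Substituting the ansatz and comparing the coefficients of $1$ and of $x^2$ yields the system
\[
\frac{A'(t)}{A(t)}=-B(t),\qquad B'(t)=\frac{\beta^2}{2}-2B(t)^2 .
\]

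It then remains to solve this. The Riccati equation for $B$ is linearized by the substitution $B=\tfrac12\,w'/w$, which converts it into $w''=\beta^2w$; for $\beta>0$ the solution with $B(0)=\alpha$ is $w(t)=\beta\cosh\beta t+2\alpha\sinh\beta t$, so that
\[
B(t)=\frac{\beta}{2}\cdot\frac{\beta\sinh\beta t+2\alpha\cosh\beta t}{\beta\cosh\beta t+2\alpha\sinh\beta t},
\]
while for $\beta=0$ one gets $w(t)=1+2\alpha t$ and $B(t)=\alpha/(1+2\alpha t)$. The equation for $A$ is then immediate from the identity $B=\tfrac12(\log w)'$, which gives $\int_0^t B\,ds=\tfrac12\log\big(w(t)/w(0)\big)$ and hence $A(t)=(\cosh\beta t+\tfrac{2\alpha}{\beta}\sinh\beta t)^{-1/2}$, respectively $A(t)=(1+2\alpha t)^{-1/2}$ when $\beta=0$.

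Assembling $A(t)$ with the exponent $-B(t)x^2$ and then taking the $d$-th power as in the first step produces the two cases of the lemma. The only genuinely delicate point in all of this is the justification of the Gaussian form of $u$ and of the uniqueness invoked in the Feynman--Kac step; I would settle it by the ``guess and verify'' route just described, or alternatively by recognizing $\tfrac12\partial_x^2-\tfrac{\beta^2}{2}x^2$ as a rescaled harmonic-oscillator operator and quoting Mehler's kernel, or by a Girsanov change of measure turning $X$ into a mean-reverting process so that only a single Gaussian integral remains. Every other step is routine.
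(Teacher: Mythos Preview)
Your argument is correct and is one of the standard routes to this identity: factorize over coordinates, apply Feynman--Kac with the quadratic potential, make the Gaussian ansatz, and solve the resulting Riccati/linear system. The paper itself does not prove the lemma; it merely records the formula as well known and points to \cite{TI} for one derivation, so there is no in-paper proof to compare against. Your sketch is exactly the kind of derivation the authors have in mind (and the alternatives you mention --- Mehler's kernel, or a Girsanov change to an Ornstein--Uhlenbeck process --- are the other ``many ways'' they allude to).

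One remark worth recording: your computation gives
\[
B(t)=\frac{\beta}{2}\,\frac{\beta\sinh\beta t+2\alpha\cosh\beta t}{\beta\cosh\beta t+2\alpha\sinh\beta t},
\]
so the exponent is $-\dfrac{\beta x^2}{2}\,\dfrac{\beta\sinh\beta t+2\alpha\cosh\beta t}{\beta\cosh\beta t+2\alpha\sinh\beta t}$, with a \emph{plus} in the numerator. The lemma as printed has a minus there, but your version is the correct one: it reduces to $e^{-\alpha x^2}$ at $t=0$ and to the stated $\beta=0$ case in the limit $\beta\to0$, whereas the printed sign does neither. So when you say your assembly ``produces the two cases of the lemma'', be aware that you are in fact correcting a typo in the statement rather than reproducing it verbatim.
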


This is well-known formula and there are many ways to prove it. 
One way is presented in \cite{TI}.

The following is an immediate consequence of Lemma \ref{lemma:d-quadratic}:

\begin{corollary}\label{Corollary:d-quadratic}
Let $X$ be a $d$-dimensional Wiener process 
starting at $x$. 
For $ \beta > 0 $, it holds
\begin{equation*}
\begin{split}
\Ex \left[ e^{-\frac{\beta^2}{2} \int_{s}^{t}  |X^x_v|^2 \,dv } \right]
=&\left( 
\cosh{ \beta( t -s ) } 
+\beta s \sinh{ \beta( t -s ) } 
\right)^{-d/2}\\
&\times \exp \left(
-\frac{ \beta x^2 }{ 2 } \frac{ \sinh{ \beta( t -s )} }
{ \cosh{ \beta( t -s ) } 
+\beta s  \sinh{ \beta( t -s ) } }\, 
\right).
\end{split}
\end{equation*}
\end{corollary}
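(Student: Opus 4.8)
The plan is to derive Corollary~\ref{Corollary:d-quadratic} directly from Lemma~\ref{lemma:d-quadratic} by a conditioning argument at time $s$, exploiting the Markov property of Brownian motion. Concretely, I would write
\[
\Ex\!\left[ e^{-\frac{\beta^2}{2}\int_s^t |X^x_v|^2\,dv} \right]
= \Ex\!\left[ \Ex\!\left[ e^{-\frac{\beta^2}{2}\int_s^t |X^x_v|^2\,dv} \,\middle|\, \mathcal{F}^X_s \right] \right],
\]
and on the inner conditional expectation apply the Markov property so that, given $X^x_s = z$, the remaining integral is distributed as $\frac{\beta^2}{2}\int_0^{t-s} |X^z_u|^2\,du$ for a fresh $d$-dimensional Brownian motion $X^z$ started at $z$. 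By the $\beta>0$ case of Lemma~\ref{lemma:d-quadratic} with $\alpha = 0$ and $t$ replaced by $t-s$, the inner expectation equals
\[
\left(\cosh\beta(t-s)\right)^{-d/2}\exp\!\left( -\frac{\beta |z|^2}{2}\,\frac{\sinh\beta(t-s)}{\cosh\beta(t-s)} \right)
= \left(\cosh\beta(t-s)\right)^{-d/2} e^{-\frac{\beta}{2}|z|^2 \tanh\beta(t-s)}.
\]

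Next I would substitute $z = X^x_s$ and take the outer expectation. Since $X^x_s$ is Gaussian with mean $x$ and covariance $s\,I_d$, we have $|X^x_s|^2 = \sum_{i=1}^d (X^x_s)_i^2$ with the coordinates independent, so the outer expectation factors as a product of $d$ identical one-dimensional Gaussian integrals. Each factor is of the form $\Ex[e^{-a G^2}]$ for $G \sim \mathcal{N}(x_i, s)$ with $a = \tfrac{\beta}{2}\tanh\beta(t-s) \geq 0$; this is an elementary Gaussian integral equal to $(1+2as)^{-1/2}\exp(-a x_i^2/(1+2as))$. Multiplying the $d$ factors and combining with the prefactor $(\cosh\beta(t-s))^{-d/2}$ gives a total prefactor
\[
\left(\cosh\beta(t-s)\right)^{-d/2}\left(1 + \beta s \tanh\beta(t-s)\right)^{-d/2}
= \left(\cosh\beta(t-s) + \beta s \sinh\beta(t-s)\right)^{-d/2},
\]
and an exponent $-\dfrac{a|x|^2}{1+2as} = -\dfrac{\beta|x|^2}{2}\cdot\dfrac{\tanh\beta(t-s)}{1+\beta s\tanh\beta(t-s)} = -\dfrac{\beta|x|^2}{2}\cdot\dfrac{\sinh\beta(t-s)}{\cosh\beta(t-s)+\beta s\sinh\beta(t-s)}$, which is exactly the claimed formula.

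Alternatively, and perhaps more slickly, one could avoid conditioning entirely by noting that $\int_s^t |X^x_v|^2\,dv$ and $\alpha|X^x_t|^2 + \tfrac{\beta^2}{2}\int_0^t |X^x_v|^2\,dv$ are both quadratic functionals, and one can try to match them: running the lemma's computation "from time $s$" is the content above. The only real care needed is bookkeeping in the trigonometric-hyperbolic identities — in particular rewriting $\cosh\theta + c\sinh\theta$ versus $1 + c\tanh\theta$ times $\cosh\theta$, and checking the exponent simplification — but there is no genuine obstacle; the main step is simply invoking Lemma~\ref{lemma:d-quadratic} at the shifted time. I expect the Gaussian integration over $X^x_s$ and the collapse of the two prefactors into $\left(\cosh\beta(t-s)+\beta s\sinh\beta(t-s)\right)^{-d/2}$ to be the only place requiring attention to constants.
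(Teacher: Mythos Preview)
Your proposal is correct and follows essentially the same route as the paper: condition at time $s$, apply Lemma~\ref{lemma:d-quadratic} with $\alpha=0$ and time $t-s$ for the inner expectation, then integrate out the Gaussian variable $X^x_s$. The only cosmetic difference is that the paper handles the outer expectation by invoking the $\beta=0$ case of Lemma~\ref{lemma:d-quadratic} (with $\alpha=\tfrac{\beta}{2}\tanh\beta(t-s)$ and time $s$) rather than redoing the one-dimensional Gaussian integral by hand, but the computation is identical.
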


\begin{proof}
By a Markov property, a Tower property, and Lemma \ref{lemma:d-quadratic},
\begin{equation*}
\begin{split}
\Ex \left[ e^{ -\frac{ \beta^2 }{ 2 } \int_s^t  |X^x_v|^2 \,dv } \right] 
&=\Ex \left[\, 
\Ex \left[ 
e^{ -\frac{ \beta^2 }{ 2 } \int_s^t |X^x_v|^2 \,dv }\, |\, \mathcal{F}^X_s 
\right]\,\, 
\right]
=\Ex \left[\, 
\Ex \left[ 
e^{-\frac{ \beta^2 }{ 2 } \int_0^{t-s} |X^x_v|^2 \,dv }\, |\, X^x_s
\right]\,\, 
\right]\\
&=\cosh{ \beta( t -s ) }^{-d/2}\, 
\Ex \left[ 
\exp \left(
-\frac{ \beta |X^x_s|^2 }{ 2 } 
\frac{ \sinh{ \beta( t -s ) } }{ \cosh{ \beta( t -s )} } 
\right) 
\right].
\end{split}
\end{equation*}
The proof is complete by replacing $\alpha$ by $\frac{ \beta }{ 2 } 
\frac{ \sinh{ \beta( t -s ) } }{ \cosh{ \beta( t -s ) } }$ 
in Lemma \ref{lemma:d-quadratic}.
\end{proof}

%% Authors: please follow the format below to write down references. 

\begin {thebibliography}{10}
\bibitem{AM:10}
J.~Akahori, and A.~Macrina:
``Heat Kernel Interest Rate Models with Time-Inhomogeneous Markov Processes",
submitted for publication. 

\bibitem{HKA}
J.~Akahori, Y.~Hishida, J.~Teichmann, and T.~Tsuchiya: 
``A Heat Kernel Approach to Interest Rate Models", 
{\it arXiv:0910.5033}.

\bibitem{AT:06}
J.~Akahori, and T.~Tsuchiya: 
``What is the Natural Scale for a Levy Process in Modelling Term Structure of Interest Rates?", 
{\it Asia-Pacific Financial Markets.},
12/2006, 13/4, 299--313

\bibitem{DR}
J.~D.~Amato and E.~M.~Remolona:
``The credit spread puzzle", 
{\it BIS Quarterly Review,} part 5, December 2003

\bibitem{Brigo}
D.~Brigo and F~Mercurio:
``Interest Rate Models Theory and Practice", 
{\it Springer Finance, Springer-Verlag}, 2001

\bibitem{D}
C.~Dellacherie:
``Un exemple de la th\'erie 
g\'en\'erale des processus", 
{\it S\'eminaire de probabilit\'es 
IV}, Lecture Notes in Mathematics, 124,  
Springer, pp. 60-70.

\bibitem{Fil}
D.~Filipovic:
``Term-Structure Models: A Graduate Course", 
{\it Springer Finance, Springer-Verlag}, 2009.

\bibitem{HKP}
P.~J. Hunt, J.~E. Kennedy, and A.~Pelsser
``Markov-functional interest rate models",
{\em Finance and Stochastics}, 2000,
  vol.4, number 4, 391--408
 .  
\bibitem{TI}
Y.~Inoue and T.~Tsuchiya:
``HKA to Single Defaultable Bond", to appear in 
{\em Proceedings of \it The 42nd ISCIE International Symposium on Stochastic Systems Theory and Its Applications}, 2011.

\bibitem{Rog}
C.~Rogers:
``One for all", 
{\it Risk} 10, 57-59, March 1997.

\end{thebibliography}

\end{document}